\DeclareMathOperator*{\slim}{s-lim}
\DeclareMathOperator*{\wlim}{w-lim}
\DeclareMathOperator{\supp}{supp}
\newcommand{\Eqn}[1]{&\hspace{-0.5em}#1\hspace{-0.5em}&}
\newtheorem{Ass}{Assumption}[section]
\newtheorem{The}{Theorem}[section]
\newtheorem{Pro}{Proposition}[section]
\title{Nonexistence of usual wave operators\\for fractional Laplacian\\and slowly decaying potentials}
\author{Atsuhide ISHIDA\\
\\
Department of Liberal Arts, Faculty of Engineering,
 \\Tokyo University of Science\\
\normalsize 3-1 Niijuku, 6-chome, Katsushika-ku,Tokyo 125-8585, Japan\\
\normalsize E-mail: aishida@rs.tus.ac.jp\\
\normalsize Fax: +81-3-5876-1616
}
\date{}
\begin{document}
\maketitle

\begin{abstract}
We consider quantum systems described by the fractional powers of the negative Laplacian and the interaction potentials. When a slowly decaying potential function is given, we prove the nonexistence of the wave operators, under the assumption that the Dollard-type modified wave operators exist and that they are asymptotically complete. This nonexistence indicates the borderline between short-range and long-range behavior.
\end{abstract}

\quad\textit{Keywards}: scattering theory, wave operators, fractional Laplacian\par
\quad\textit{MSC}2010: 81Q10, 81U05

\section{Introduction\label{introduction}}
We study scattering phenomena for the fractional powers of the negative Laplacian. For $1/2<\rho\leqslant1$, the fractional power of the negative Laplacian $\omega_\rho(D)$ as a self-adjoint operator acting on $L^2(\mathbb{R}^n)$ is defined by the Fourier multiplier with the symbol
\begin{equation}
\omega_\rho(\xi)=|\xi|^{2\rho}/(2\rho),\label{free}
\end{equation}
where $D$ denotes the momentum operator $D=-i\nabla=-i(\partial_{x_1},\dots,\partial_{x_n})$. More specifically, $\omega_\rho(D)$ can be represented by the Fourier integral operator
\begin{eqnarray}
\omega_\rho(D)\phi(x)\Eqn{=}(\mathscr{F}^*\omega_\rho(\xi)\mathscr{F}\phi)(x)\nonumber\\
\Eqn{=}\int_{\mathbb{R}^n}e^{ix\cdot\xi}\omega_\rho(\xi)(\mathscr{F}\phi)(\xi)d\xi/(2\pi)^{n/2}\nonumber\\
\Eqn{=}\int_{\mathbb{R}^{2n}}e^{i(x-y)\cdot\xi}\omega_\rho(\xi)\phi(y)dyd\xi/(2\pi)^n
\end{eqnarray}
for $\phi\in\mathscr{D}(\omega_\rho(D))=H^{2\rho}(\mathbb{R}^n)$, which is the Sobolev space of order $2\rho$. In particular, when $\rho=1$, $\omega_1(D)$ is the standard free Schr\"odinger operator $\omega_1(D)=-\Delta/2=-\sum_{j=1}^n\partial_{x_j}^2/2$. Although we exclude the case where $\rho=1/2$ in this paper, $\omega_{1/2}(D)$ is the massless relativistic Schr\"odinger operator $\omega_{1/2}(D)=\sqrt{-\Delta}$.\par
We assume that the potential function $V=V(x)$ is a real-valued multiplication operator, and that its value vanishes with some decaying order when $|x|$ is sufficiently large. An apparent shape is going to be given later (Assumption \ref{ass1}). We here note that we treat the case where $V\in L^\infty(\mathbb{R})$ only. The total Hamiltonian under consideration is represented by the sum of $\omega_\rho(D)$ and $V$,
\begin{equation}
\omega_\rho(D)+V.\label{total}
\end{equation}
The operator \eqref{total} is also self-adjoint on $L^2(\mathbb{R}^n)$ because $V$ is bounded. In scattering theory, we treat the potential functions with some kind of singularity, for instance, of local Coulomb type. However, in this paper, we try to find a concrete example of a potential function such that the usual wave operators do not exist. Therefore, we do not have to consider the singularities.\par

In the case where the free side is the standard free Schr\"odinger operator
\begin{equation}
|D|^2/2=-\Delta/2,
\end{equation}
if we write the decay condition on $V$ as
\begin{equation}
|V(x)|\lesssim\langle x\rangle^{-\gamma}\label{polynomial_decay1}
\end{equation}
with $\gamma>0$ and $\langle x\rangle=\sqrt{1+|x|^2}$, then it is well known that, if $\gamma>1$, then the wave operators exist, and if $\gamma\leqslant1$, then the wave operators do not exist (Dollard \cite{Do2} and Reed-Simon \cite{ReSi}). That is to say, the borderline between short-range and long-range behavior is $\gamma=1$. The classical trajectory of the particle in the dynamics of the free Schr\"odinger equation has order $x(t)=O(t)$ as $t\rightarrow \infty$. The Cook-Kuroda method says that if
\begin{equation}
\int_1^\infty\|Ve^{-it|D|^2/2}\phi\|dt<\infty\label{cook_kuroda}
\end{equation}
for $\phi\in L^2(\mathbb{R}^n)$, then the wave operators
\begin{equation}
\slim_{t\rightarrow \pm \infty}e^{it(|D|^2/2+V)}e^{-it|D|^2/2}
\end{equation}
exist. This is because
\begin{eqnarray}
\lefteqn{\|(e^{it_1(|D|^2/2+V)}e^{-it_1|D|^2/2}-e^{it_2(|D|^2/2+V)}e^{-it_2|D|^2/2})\phi\|}\nonumber\\
\Eqn{\leqslant}\int_{t_2}^{t_1}\|\partial_t(e^{it(|D|^2/2+V)}e^{-it|D|^2/2})\phi\|dt=\int_{t_2}^{t_1}\|Ve^{-it|D|^2/2}\phi\|dt\longrightarrow0
\end{eqnarray}
as $t_1,t_2\rightarrow\infty$. Therefore, we can formally verify the borderline by substituting the classical order $x(t)=O(t)$ for $V(x)$,
\begin{equation}
\int_1^\infty\|Ve^{-it|D|^2/2}\phi\|dt=\int_1^\infty\|V(x(t))e^{-it|D|^2/2}\phi\|dt\lesssim\int_1^\infty t^{-\gamma}dt,\label{cook_kuroda_free}
\end{equation}
because of the decay assumption \eqref{polynomial_decay1}. This estimate is very rough and formal. However, the right-hand side of \eqref{cook_kuroda_free} is bounded if and only if $\gamma>1$.\par

In the case where the free side is replaced by
\begin{equation}
H_0^S=|D|^2/2-E\cdot x
\end{equation}
with $E\in\mathbb{R}^n\setminus\{0\}$, $H_0^S$ is called the Stark Hamiltonian. The classical trajectory has order $x(t)=O(t^2)$ as $t\rightarrow \infty$ by solving the Newton equation $\ddot{x}(t)=E$. By the rough estimate
\begin{equation}
\int_1^\infty\|Ve^{-itH_0^S}\phi\|dt\lesssim\int_1^\infty t^{-2\gamma}dt,
\end{equation}
we expect that the borderline will be $\gamma=1/2$. In fact, an affirmative answer to this was obtained by Ozawa \cite{Oz}. By giving a counter-example for the potential function $V$ such that the wave operators
\begin{equation}
\slim_{t\rightarrow \pm \infty}e^{it(H_0^S+V)}e^{-itH_0^S}
\end{equation}
do not exist, Ozawa \cite{Oz} determined that the borderline is $\gamma=1/2$.\par

Recently, Ishida \cite{Is1} found the borderline in the case where the free side is the so-called repulsive Hamiltonian
\begin{equation}
H_0^R=|D|^2/2-|x|^2/2\label{repulsive}
\end{equation}
by applying the approach developed by Ozawa \cite{Oz}. If $-|x|^2$ is replaced by $+|x|^2$ in \eqref{repulsive}, then this operator is the well-known harmonic oscillator and the particle does not scatter in this case. However, in the repulsive case, the particle can scatter with surprising velocity. Indeed, the classical trajectory is given by solving the Newton equation $\ddot{x}(t)=x(t)$, and thus we have $x(t)=O(e^{t})$ as $t\rightarrow \infty$. From our previous discussion, when we impose the decay condition on $V$ by
\begin{equation}
|V(x)|\lesssim(\log \langle x\rangle)^{-\gamma}
\end{equation}
with $\gamma>0$, it is reasonable to expect the borderline to be $\gamma=1$ because
\begin{equation}
\int_1^\infty\|Ve^{-itH_0^R}\phi\|dt\lesssim\int_1^\infty t^{-\gamma}dt.
\end{equation}
This expectation is also true. Ishida \cite{Is1} gave a counter-example for the potential $V$ such that the wave operators
\begin{equation}
\slim_{t\rightarrow \pm \infty}e^{it(H_0^R+V)}e^{-itH_0^R}
\end{equation}
do not exist.\par
We now discuss the system governed by the fractional pair $\omega_\rho(D)$ and $\omega_\rho(D)+V$. The derivative of the symbol $\omega_\rho$ at $\xi$ is
\begin{equation}
(\nabla_\xi\omega_\rho)(\xi)=|\xi|^{2\rho-2}\xi.
\end{equation}
Thus, by the canonical equation of motion, the classical trajectory of the particle follows
\begin{equation}
x(t)=|\xi|^{2\rho-2}\xi t
\end{equation}
and so has the order $x(t)=O(t)$ as $t\rightarrow\infty$. This suggests that the borderline is $\gamma=1$ when we write the decay condition on $V$ as in \eqref{polynomial_decay1}. Before presenting our main theorem, we state some assumptions.

\begin{Ass}\label{ass1}
Let the potential function $V=V(x)$ be a real-valued multiplication operator and have the following shape
\begin{equation}
V(x)=\lambda|x|^{-\gamma}F(|x|\geqslant1)\label{potential}
\end{equation}
with $0<\gamma\leqslant1$ and $0\not=\lambda\in\mathbb{R}$, where $F(\cdots)$ is the characteristic function of the set $\{\cdots\}$.\par
\end{Ass}

For the potential function \eqref{potential}, we define the Dollard-type modification factor $M_{\rho,\mathscr{D}}(t)$ by
\begin{equation}
M_{\rho,\mathscr{D}}(t)=\exp\left[-i\int_0^tV((\nabla_\xi\omega_\rho)(D)\tau)d\tau\right]\label{modifier}
\end{equation}
which uses the Fourier multiplier. Moreover, we write the total Hamiltonian as
\begin{equation}
H_\rho=\omega_\rho(D)+V
\end{equation}
for simplicity.

\begin{Ass}\label{ass2}
The Dollard-type modified wave operators
\begin{equation}
W_{\rho,\mathscr{D}}^\pm(H_\rho,\omega_\rho(D))=\slim_{t\rightarrow \pm \infty}e^{itH_\rho}e^{-it\omega_\rho(D)}M_{\rho,\mathscr{D}}(t)
\end{equation}
exist and are asymptotically complete. That is, the strong limits
\begin{equation}
W_{\rho,\mathscr{D}}^\pm(\omega_\rho(D),H_\rho)=\slim_{t\rightarrow \pm \infty}M_{\rho,\mathscr{D}}(t)^*e^{it\omega_\rho(D)}e^{-itH_\rho}P_{\rm ac}(H_\rho)
\end{equation}
exist, where $P_{\rm ac}(H_\rho)$ denotes the orthogonal projection to the absolutely continuous subspace of $H_\rho$.
\end{Ass}

The Dollard-type modified wave operators were first introduced into scattering theory by Dollard \cite{Do1, Do2} to discuss $|D|^2/2$ and the Coulomb interactions. Many details of the general long-range potentials are stated in Derezi\'{n}ski-G\'{e}rard \cite{DeGe}. It also known that this modification works well for the Stark effect (see Jensen-Yajima \cite{JeYa}, White \cite{Wh} and Adachi-Tamura \cite{AdTa}). The Dollard-type modification factor is constructed from a solution of the Hamilton-Jacobi equation. In our case, we construct this modifier by solving the equation
\begin{equation}
(\partial_tS)(t,\xi)=\omega_\rho(\xi)
\end{equation} 
and substituting $(\nabla_\xi S)(t,\xi)$ into
\begin{equation}
\exp\left[-i\int_0^tV((\nabla_\xi S)(\tau,\xi))d\tau\right].
\end{equation}
This is equal to the symbol for $M_{\rho,\mathscr{D}}(t)$ in \eqref{modifier}.
\par
We now state the main theorem of this paper, which gives an affirmative answer to the question above.

\begin{The}\label{the}
Under these assumptions, the usual wave operators
\begin{equation}
W_\rho^\pm(H_\rho,\omega_\rho(D))=\slim_{t\rightarrow \pm \infty}e^{itH_\rho}e^{-it\omega_\rho(D)}\label{wave_opetators}
\end{equation}
do not exist.
\end{The}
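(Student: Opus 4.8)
The plan is to argue by contradiction: assume the usual wave operators $W_\rho^\pm(H_\rho,\omega_\rho(D))$ exist. Combined with the existence of the Dollard-type modified wave operators $W_{\rho,\mathscr{D}}^\pm(H_\rho,\omega_\rho(D))$ from Assumption \ref{ass2} and the chain rule for wave operators, the existence of both would force the strong limit
\begin{equation}
\slim_{t\rightarrow\pm\infty}M_{\rho,\mathscr{D}}(t)
\end{equation}
to exist on a suitable dense set (roughly, $\slim_{t}e^{it\omega_\rho(D)}e^{-itH_\rho}P_{\rm ac}(H_\rho)\cdot e^{itH_\rho}e^{-it\omega_\rho(D)}M_{\rho,\mathscr{D}}(t)$ collapses the dynamics and leaves only $M_{\rho,\mathscr{D}}(t)$, using asymptotic completeness to know the range of the modified operator is all of $P_{\rm ac}(H_\rho)L^2$ and that $H_\rho$ has no singular continuous spectrum on that part). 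So it suffices to show that $M_{\rho,\mathscr{D}}(t)$ does \emph{not} converge strongly as $t\to\pm\infty$; this is the real content, and everything else is soft functional-analytic bookkeeping.

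To show $M_{\rho,\mathscr{D}}(t)$ has no strong limit, I would pick a convenient $\phi$ whose Fourier transform is supported in an annulus $\{a\leqslant|\xi|\leqslant b\}$ bounded away from the origin (so that $(\nabla_\xi\omega_\rho)(\xi)=|\xi|^{2\rho-2}\xi$ is bounded above and below there), and examine the phase
\begin{equation}
\theta(t,\xi)=\int_0^t V\bigl((\nabla_\xi\omega_\rho)(\xi)\tau\bigr)d\tau=\int_0^t \lambda\,\bigl||\xi|^{2\rho-2}\xi\bigr|^{-\gamma}\tau^{-\gamma}F\bigl(|\xi|^{2\rho-2}\tau\geqslant1\bigr)d\tau.
\end{equation}
For $\tau$ large this integrand is $\lambda\,|\xi|^{-\gamma(2\rho-1)}\tau^{-\gamma}$, so for $0<\gamma<1$ we get $\theta(t,\xi)\sim c(\xi)\,t^{1-\gamma}\to\infty$, and for $\gamma=1$ we get $\theta(t,\xi)\sim c(\xi)\log t\to\infty$, where in both cases $c(\xi)=\lambda\,|\xi|^{-\gamma(2\rho-1)}/(1-\gamma)$ (resp.\ $\lambda\,|\xi|^{-(2\rho-1)}$) is a nonconstant function of $\xi$ on the annulus because $\lambda\neq0$ and $2\rho-1>0$. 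Then $M_{\rho,\mathscr{D}}(t)\phi=\mathscr{F}^*e^{-i\theta(t,\xi)}\mathscr{F}\phi$, and I would show $\|(M_{\rho,\mathscr{D}}(t)-M_{\rho,\mathscr{D}}(s))\phi\|$ does not tend to $0$: by Plancherel this norm is $\|(e^{-i\theta(t,\xi)}-e^{-i\theta(s,\xi)})\hat\phi(\xi)\|_{L^2}$, and since $\theta(t,\xi)-\theta(s,\xi)$ varies by an unbounded amount across the support of $\hat\phi$ as $t,s\to\infty$ (the difference $c(\xi)(t^{1-\gamma}-s^{1-\gamma})$ is nonconstant in $\xi$ with arbitrarily large oscillation for suitable $t,s$), one can choose sequences $t_k,s_k\to\infty$ making this $L^2$ distance bounded below — e.g.\ by a stationary-phase / Riemann–Lebesgue argument the cross term $\mathrm{Re}\langle M_{\rho,\mathscr{D}}(t_k)\phi,M_{\rho,\mathscr{D}}(s_k)\phi\rangle\to0$ while $\|M_{\rho,\mathscr{D}}(t_k)\phi\|=\|\phi\|$ is fixed, so the distance approaches $\sqrt2\|\phi\|$. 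Hence $\{M_{\rho,\mathscr{D}}(t)\phi\}$ is not Cauchy, so no strong limit exists.

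The main obstacle is the first paragraph's reduction: one must be careful that the existence of $W_\rho^\pm$ together with Assumption \ref{ass2} genuinely yields strong convergence of $M_{\rho,\mathscr{D}}(t)$ rather than merely weak convergence, and that the dense set on which one gets the contradiction is preserved under the intertwining. Concretely, I expect to use that $W_{\rho,\mathscr{D}}^\pm(\omega_\rho(D),H_\rho)\,W_\rho^\pm(H_\rho,\omega_\rho(D))=\slim_t M_{\rho,\mathscr{D}}(t)^*e^{it\omega_\rho(D)}e^{-itH_\rho}P_{\rm ac}(H_\rho)e^{itH_\rho}e^{-it\omega_\rho(D)}$, and then that $e^{it\omega_\rho(D)}e^{-itH_\rho}P_{\rm ac}(H_\rho)e^{itH_\rho}e^{-it\omega_\rho(D)}$ converges strongly to $W_{\rho,\mathscr{D}}^+ (W_{\rho,\mathscr{D}}^+)^*$ (a projection), so that $M_{\rho,\mathscr{D}}(t)$ itself must converge strongly on the range of that projection — which, by asymptotic completeness and the fact that $V\in L^\infty$ with $H_\rho$ having no singular continuous spectrum, is large enough to contain vectors of the type $\phi$ constructed above (after an approximation argument allowing $\hat\phi$ supported in an annulus away from $0$). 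The second and third paragraphs are then routine Fourier analysis; the subtlety is entirely in making the abstract reduction airtight and in verifying that the "bad" test vectors $\phi$ lie in the relevant range.
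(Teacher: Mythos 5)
Your overall strategy is sound and it is a genuinely different route from the paper's. The paper proves the stronger statement that $\wlim_{t\rightarrow\pm\infty}e^{itH_\rho}e^{-it\omega_\rho(D)}=0$ (Proposition \ref{pro}): it first shows $\wlim M_{\rho,\mathscr{D}}(t)=0$ by writing $M_{\rho,\mathscr{D}}(t)\phi=e^{-i\lambda T_{\rho,\gamma}(t)|D|^{-\gamma(2\rho-1)}}R(D)\phi$ for $\supp\mathscr{F}\phi$ away from the origin and applying the Riemann--Lebesgue lemma to the absolutely continuous spectral measure of $|D|^{-\gamma(2\rho-1)}$, then splits $e^{itH_\rho}e^{-it\omega_\rho(D)}$ into its $P_{\rm ac}(H_\rho)$ and $1-P_{\rm ac}(H_\rho)$ pieces, handling the first by asymptotic completeness and the second by $\wlim e^{-it\omega_\rho(D)}=0$; unitarity then rules out a strong limit. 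You instead push the contradiction through the modifier: existence of $W_\rho^\pm$ plus Assumption \ref{ass2} forces strong convergence of the modifier, which you refute by an oscillation argument. Your oscillation computation is essentially the same Riemann--Lebesgue calculation as the paper's (and correctly isolates why $\rho=1/2$ is excluded: the symbol $|\xi|^{-\gamma(2\rho-1)}$ degenerates to a constant), so neither route is shorter; what yours buys is that the $1-P_{\rm ac}(H_\rho)$ part never appears, so you do not need the absence of singular continuous spectrum or the eigenfunction decomposition the paper invokes.

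Two points in your reduction need repair before it is airtight. First, $e^{it\omega_\rho(D)}e^{-itH_\rho}P_{\rm ac}(H_\rho)e^{itH_\rho}e^{-it\omega_\rho(D)}=e^{it\omega_\rho(D)}P_{\rm ac}(H_\rho)e^{-it\omega_\rho(D)}$ does not converge to $W_{\rho,\mathscr{D}}^{+}(W_{\rho,\mathscr{D}}^{+})^{*}=P_{\rm ac}(H_\rho)$; under your contradiction hypothesis it converges strongly to the identity, because the intertwining relation and the absolute continuity of $\omega_\rho(D)$ give $P_{\rm ac}(H_\rho)W_\rho^{+}=W_\rho^{+}$, while unitarity gives $e^{it\omega_\rho(D)}e^{-itH_\rho}W_\rho^{+}\phi=\phi+o(1)$. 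The cleaner statement is simply
\begin{equation*}
M_{\rho,\mathscr{D}}(t)^{*}\phi=M_{\rho,\mathscr{D}}(t)^{*}e^{it\omega_\rho(D)}e^{-itH_\rho}P_{\rm ac}(H_\rho)W_\rho^{+}\phi+o(1)\longrightarrow W_{\rho,\mathscr{D}}^{+}(\omega_\rho(D),H_\rho)W_\rho^{+}\phi
\end{equation*}
for \emph{every} $\phi\in L^2(\mathbb{R}^n)$, so no density or range argument is needed at all. Second, this yields strong convergence of $M_{\rho,\mathscr{D}}(t)^{*}$, not of $M_{\rho,\mathscr{D}}(t)$; your non-Cauchy argument applies verbatim to the adjoint (replace $\lambda$ by $-\lambda$), so this is harmless, but as written the object you refute is not the object your reduction produces. (Also a small typo: the characteristic function should read $F(|\xi|^{2\rho-1}\tau\geqslant1)$, since $|(\nabla_\xi\omega_\rho)(\xi)|=|\xi|^{2\rho-1}$.)
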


In scattering theory, it is very important to clarify the borderline between short-range and long-range behavior. In the long-range case, the usual wave operators do not exist. Therefore, we instead have to consider some kind of modified wave operator. The modification is not always unique. For the fractional powers of the negative Laplacian, Kitada \cite{Ki1,Ki2} proposed the Isozaki-Kitada-type modified wave operators and discussed their existence and completeness for $1/2\leqslant\rho\leqslant1$. Inverse scattering problems were also studied by Jung \cite{Ju} for $\rho=1/2$ and by Ishida \cite{Is2} for $1/2<\rho\leqslant1$, in the case of the short-range interactions.

\section{Nonexistence of Wave Operators}
We prove Theorem \ref{the} in the second section. Our proof is motivated by the original idea in Dollard \cite{Do2} (see also Reed-Simon \cite{ReSi}), which differs from the approaches in Ozawa \cite{Oz} and Ishida \cite{Is1}.\par
The following proposition yields the proof of Theorem \ref{the}

\begin{Pro}\label{pro}
Under Assumptions \ref{ass1} and \ref{ass2},
\begin{equation}
\wlim_{t\rightarrow \pm \infty}e^{itH_\rho}e^{-it\omega_\rho(D)}=0\label{wlim1}
\end{equation}
holds.
\end{Pro}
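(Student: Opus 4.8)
The plan is to establish the weak limit in \eqref{wlim1} by exploiting the structure provided by Assumption \ref{ass2}: since the Dollard-type modified wave operators exist and are asymptotically complete, the usual wave operator (if it existed) would have to coincide with $W_{\rho,\mathscr{D}}^\pm$ composed with the strong limit of the modification factor $M_{\rho,\mathscr{D}}(t)$ itself. The key point is that $M_{\rho,\mathscr{D}}(t)$ does \emph{not} converge strongly (nor weakly) to anything nonzero — in fact I expect to show $\wlim_{t\to\pm\infty}M_{\rho,\mathscr{D}}(t)=0$ — and this forces the weak limit of $e^{itH_\rho}e^{-it\omega_\rho(D)}$ to vanish.

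First I would insert the modification factor artificially: write
\begin{equation}
e^{itH_\rho}e^{-it\omega_\rho(D)}=\left(e^{itH_\rho}e^{-it\omega_\rho(D)}M_{\rho,\mathscr{D}}(t)\right)M_{\rho,\mathscr{D}}(t)^*,
\end{equation}
using that $M_{\rho,\mathscr{D}}(t)$ is unitary (its symbol is a phase). By Assumption \ref{ass2}, the first bracketed factor converges strongly to $W_{\rho,\mathscr{D}}^\pm$. So for $\phi,\psi\in L^2(\mathbb{R}^n)$,
\begin{equation}
\langle\psi,e^{itH_\rho}e^{-it\omega_\rho(D)}\phi\rangle=\langle\psi,e^{itH_\rho}e^{-it\omega_\rho(D)}M_{\rho,\mathscr{D}}(t)M_{\rho,\mathscr{D}}(t)^*\phi\rangle,
\end{equation}
and since $e^{itH_\rho}e^{-it\omega_\rho(D)}M_{\rho,\mathscr{D}}(t)\to W_{\rho,\mathscr{D}}^\pm$ strongly while $M_{\rho,\mathscr{D}}(t)^*\phi$ stays bounded, the problem reduces to showing $M_{\rho,\mathscr{D}}(t)^*\phi\rightharpoonup0$ weakly, or at least that $\langle (W_{\rho,\mathscr{D}}^\pm)^*\psi, M_{\rho,\mathscr{D}}(t)^*\phi\rangle\to0$. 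Since $W_{\rho,\mathscr{D}}^\pm$ has dense range on $P_{\rm ac}(H_\rho)L^2$ and $e^{itH_\rho}e^{-it\omega_\rho(D)}$ is bounded, it suffices to prove $\wlim_{t\to\pm\infty}M_{\rho,\mathscr{D}}(t)=0$.

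To prove that weak limit I would work on the Fourier side, where $M_{\rho,\mathscr{D}}(t)$ acts as multiplication by the phase $\exp[-i\int_0^t V(|\xi|^{2\rho-2}\xi\tau)\,d\tau]$. Using the explicit form of $V$ from Assumption \ref{ass1}, for fixed $\xi\neq0$ one computes
\begin{equation}
\int_0^t V(|\xi|^{2\rho-2}\xi\tau)\,d\tau=\lambda\int_{c(\xi)}^t(|\xi|^{2\rho-1}\tau)^{-\gamma}\,d\tau,
\end{equation}
which, since $0<\gamma\leqslant1$, grows without bound as $t\to\infty$ (like $t^{1-\gamma}$ if $\gamma<1$, like $\log t$ if $\gamma=1$) — and crucially its \emph{gradient in }$\xi$ also grows, so the phase oscillates increasingly rapidly in $\xi$. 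By the Riemann-Lebesgue-type stationary-phase argument (or simply: a multiplication operator by $e^{i\theta_t(\xi)}$ with $|\nabla\theta_t(\xi)|\to\infty$ on the support tends weakly to zero on $L^2$), for any $\hat\phi,\hat\psi\in L^2$ the pairing $\int\overline{\hat\psi(\xi)}e^{-i\theta_t(\xi)}\hat\phi(\xi)\,d\xi\to0$. A clean way is to approximate $\hat\phi,\hat\psi$ by Schwartz functions supported away from $\xi=0$ and integrate by parts in $\xi$, gaining a factor $t^{-(1-\gamma)}$ or $(\log t)^{-1}$; the contribution near $\xi=0$ is controlled by taking the support small since $M_{\rho,\mathscr{D}}(t)$ is a contraction.

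The main obstacle is the behavior near $\xi=0$, where $|\xi|^{2\rho-2}$ is singular (for $\rho<1$) and the phase gradient may not be uniformly large, so the integration-by-parts estimate degrades; this must be handled by a separate cutoff argument exploiting only that $\|M_{\rho,\mathscr{D}}(t)\|\leqslant1$ and that the measure of a small ball around the origin is small. A secondary technical point is making the reduction in the second paragraph fully rigorous — i.e. confirming that density of the range of $W_{\rho,\mathscr{D}}^\pm$ together with asymptotic completeness (Assumption \ref{ass2}) really does let us pass from testing against arbitrary $\psi$ to testing against vectors of the form $(W_{\rho,\mathscr{D}}^\pm)^*\psi$, which span a dense subspace after composing with the bounded operators in play.
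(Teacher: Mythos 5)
Your central idea---that everything reduces to $\wlim_{t\to\pm\infty}M_{\rho,\mathscr{D}}(t)=0$---is exactly the engine of the paper's proof, and your non-stationary-phase argument for that limit is a workable alternative to the paper's route. The paper instead notes that $|D|^{-\gamma(2\rho-1)}$ is self-adjoint with purely absolutely continuous spectrum, writes $(M_{\rho,\mathscr{D}}(t)\phi,\psi)$ via the spectral theorem and Radon--Nikodym as $\sqrt{2\pi}(\mathscr{F}f_\rho)(\lambda T_{\rho,\gamma}(t))$ with $f_\rho\in L^1(\mathbb{R})$, and invokes Riemann--Lebesgue; this sidesteps your worry about $\xi=0$ entirely, since $\phi$ is first taken with $\supp\mathscr{F}\phi\subset\{|\xi|\geqslant\epsilon\}$ and the general case follows by density together with the uniform bound $\|M_{\rho,\mathscr{D}}(t)\|=1$.

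However, the reduction in your second paragraph has a genuine gap, and it is not merely ``secondary''. Writing $e^{itH_\rho}e^{-it\omega_\rho(D)}=\bigl(e^{itH_\rho}e^{-it\omega_\rho(D)}M_{\rho,\mathscr{D}}(t)\bigr)M_{\rho,\mathscr{D}}(t)^*$, you cannot replace the first factor by its strong limit $W_{\rho,\mathscr{D}}^\pm$, because it acts on the moving vector $M_{\rho,\mathscr{D}}(t)^*\phi$, which by your own weak-limit result has no strongly convergent subsequence; strong operator convergence gives nothing along such a sequence. The correct fix is to take adjoints and split $\psi$ spectrally, which is where both halves of Assumption \ref{ass2} enter. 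On the absolutely continuous part one uses asymptotic completeness, i.e.\ $\|(M_{\rho,\mathscr{D}}(t)^*e^{it\omega_\rho(D)}e^{-itH_\rho}-W_{\rho,\mathscr{D}}^\pm(\omega_\rho(D),H_\rho))P_{\rm ac}(H_\rho)\psi\|\to0$, so the pairing collapses to $(M_{\rho,\mathscr{D}}(t)\phi,W_{\rho,\mathscr{D}}^\pm(\omega_\rho(D),H_\rho)\psi)+o(1)\to0$ by the weak nullity of $M_{\rho,\mathscr{D}}(t)$. On the component $(1-P_{\rm ac}(H_\rho))\psi$ your scheme gives nothing: $(W_{\rho,\mathscr{D}}^\pm)^*$ annihilates it, but the remaining term $(M_{\rho,\mathscr{D}}(t)\phi,\,M_{\rho,\mathscr{D}}(t)^*e^{it\omega_\rho(D)}e^{-itH_\rho}(1-P_{\rm ac}(H_\rho))\psi)$ is a pairing of two merely weakly null sequences of constant norm and need not vanish. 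The paper disposes of it separately: since $\sigma_{\rm sc}(H_\rho)=\emptyset$ (Kitada), $(1-P_{\rm ac}(H_\rho))\psi$ is spanned by eigenvectors, whence $((1-P_{\rm ac}(H_\rho))e^{itH_\rho}e^{-it\omega_\rho(D)}\phi,\psi)=e^{itE_\rho}(e^{-it\omega_\rho(D)}\phi,(1-P_{\rm ac}(H_\rho))\psi)\to0$ using the additional fact $\wlim_{t\to\infty}e^{-it\omega_\rho(D)}=0$, which your proposal never establishes or uses. You need both of these ingredients to close the argument.
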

Before proving Proposition \ref{pro}, we first give a proof of Theorem \ref{the}.

\begin{proof}[Proof of Theorem \ref{the}]
According to Proposition \ref{pro}, if we assume that the strong limits $W_\rho^\pm(H_\rho,\omega_\rho(D))$ in \eqref{wave_opetators} exist, then the limits have to be equal to zero. However,
\begin{equation}
\|e^{itH_\rho}e^{-it\omega_\rho(D)}\phi\|=\|\phi\|
\end{equation}
holds for any $\phi\in L^2(\mathbb{R}^n)$ because $e^{-it\omega_\rho(D)}$ and $e^{itH_\rho}$ are unitary. Therefore, $e^{itH_\rho}e^{-it\omega_\rho(D)}$ cannot converge to zero in the strong sense. This is a contradiction.
\end{proof}
We now give the proof of Proposition \ref{pro}

\begin{proof}[Proof of Proposition \ref{pro}]
We only consider the case of $t\rightarrow\infty$ because the other case can be treated analogously. Let $\phi$ be taken from $\mathscr{S}(\mathbb{R}^n)$, which is the Schwartz functional space, such that the Fourier transform $\mathscr{F}\phi$ belongs to $C_0^\infty(\mathbb{R}^n\setminus\{0\})$ and its support satisfies
\begin{equation}
\supp\mathscr{F}\phi\subset\{ \xi\in\mathbb{R}^n\bigm||\xi|\geqslant\epsilon\}
\end{equation}
with $\epsilon>0$. By the shape of the potential \eqref{potential}, $M_{\rho,\mathscr{D}}(t)$ in \eqref{modifier} is
\begin{equation}
M_{\rho,\mathscr{D}}(t)=\exp\left[-i\lambda|D|^{-\gamma(2\rho-1)}\int_0^t\tau^{-\gamma}F(|D|^{2\rho-1}\tau\geqslant1)d\tau\right].
\end{equation}
When $|\xi|\geqslant\epsilon$ and $\tau\geqslant\epsilon^{1-2\rho}$,
\begin{equation}
F(|\xi|^{2\rho-1}\tau\geqslant1)=1
\end{equation}
holds. We thus have, under $|\xi|\geqslant\epsilon$,
\begin{gather}
\int_0^t\tau^{-\gamma}F(|\xi|^{2\rho-1}\tau\geqslant1)d\tau=\left(\int_0^{\epsilon^{1-2\rho}}+\int_{\epsilon^{1-2\rho}}^t\right)\tau^{-\gamma}F(|\xi|^{2\rho-1}\tau\geqslant1)d\tau\nonumber\\
=\int_0^{\epsilon^{1-2\rho}}\tau^{-\gamma}F(|\xi|^{2\rho-1}\tau\geqslant1)d\tau+T_{\rho,\gamma}(t),\label{eq1}
\end{gather}
where $T_{\rho,\gamma}(t)$ in \eqref{eq1} is defined by
\begin{equation}
T_{\rho,\gamma}(t)=
\begin{cases}
\ \left(t^{1-\gamma}-\epsilon^{(1-2\rho)(1-\gamma)}\right)/(1-\gamma) &\mbox{if}\quad0<\gamma<1,\\
\ \log(t/\epsilon^{1-2\rho}) &\mbox{if}\quad\gamma=1.
\end{cases}
\end{equation}
Clearly,
\begin{equation}
T_{\rho,\gamma}(t)\longrightarrow\infty\label{eq2}
\end{equation}
holds as $t\rightarrow\infty$. By \eqref{eq1}, $M_{\rho,\mathscr{D}}(t)\phi$ can be represented by
\begin{equation}
M_{\rho,\mathscr{D}}(t)\phi=\exp\left[-i\lambda T_{\rho,\gamma}(t)|D|^{-\gamma(2\rho-1)}\right]R(D)\phi,\label{eq3}
\end{equation}
where $R(D)$ is also a Fourier multiplier with its symbol
\begin{equation}
R(\xi)=\exp\left[-i\lambda|\xi|^{-\gamma(2\rho-1)}\int_0^{\epsilon^{1-2\rho}}\tau^{-\gamma}F(|\xi|^{2\rho-1}\tau\geqslant1)d\tau\right].
\end{equation}
Therefore, from this representation of $M_{\rho,\mathscr{D}}(t)\phi$, it follows that
\begin{equation}
(M_{\rho,\mathscr{D}}(t)\phi,\psi)\longrightarrow0\label{eq4}
\end{equation}
as $t\rightarrow\infty$ for any $\psi\in L^2(\mathbb{R}^n)$. Here, $(\cdot,\cdot)$ denotes the usual scalar product in $L^2(\mathbb{R}^n)$. For a more detailed explanation of limit \eqref{eq4}, note that the operator $|D|^{-\gamma(1-2\rho)}$ is self-adjoint and absolutely continuous, so we can write its spectral measure as $\mu_\rho$. By the spectral decomposition theorem and the Radon-Nikodym theorem, there exists $f_\rho=f_{\rho,\phi,\psi}\in L^1(\mathbb{R})$ such that
\begin{gather}
(M_{\rho,\mathscr{D}}(t)\phi,\psi)=\int_{-\infty}^{\infty}e^{-i\lambda T_{\rho,\gamma}(t)\sigma}d(\mu_\rho(\sigma)R(D)\phi,\psi)\nonumber\\
=\int_{-\infty}^{\infty}e^{-i\lambda T_{\rho,\gamma}(t)\sigma}f_\rho(\sigma)d\sigma=\sqrt{2\pi}(\mathscr{F}f_\rho)(\lambda T_{\rho,\gamma}(t)).\label{eq5}
\end{gather}
From \eqref{eq2}, the Reimann-Lebesgue lemma concludes that
\begin{equation}
(\mathscr{F}f_\rho)(\lambda T_{\rho,\gamma}(t))\longrightarrow0
\end{equation}
as $t\rightarrow\infty$. The convergence in \eqref{eq4} and the density argument imply that
\begin{equation}
\wlim_{t\rightarrow\infty}M_{\rho,\mathscr{D}}(t)=0.\label{wlim2}
\end{equation}
In the same way, we have
\begin{equation}
\wlim_{t\rightarrow\infty}e^{-it\omega_\rho(D)}=0,\label{wlim3}
\end{equation}
because $\omega_\rho(D)$ is self-adjoint and absolutely continuous.\par
Now, let us take any $\phi$ and $\psi$ from $L^2(\mathbb{R}^n)$. We note that the singular continuous spectrum of $H_\rho$ is empty under our assumption as proved by Kitada \cite{Ki1}. Therefore, there exist eigenvalues $E_\rho=E_{\rho,\psi}\in\mathbb{R}$ such that
\begin{equation}
H_\rho(1-P_{\rm ac}(H_\rho))\psi=E_\rho(1-P_{\rm ac}(H_\rho))\psi.\label{eigeneq}
\end{equation}
By \eqref{wlim3} and \eqref{eigeneq}, we can compute
\begin{gather}
((1-P_{\rm ac}(H_\rho))e^{itH_\rho}e^{-it\omega_\rho(D)}\phi,\psi)=(e^{-it\omega_\rho(D)}\phi,e^{-itH_\rho}(1-P_{\rm ac}(H_\rho))\psi)\nonumber\\
=e^{itE_\rho}(e^{-it\omega_\rho(D)}\phi,(1-P_{\rm ac}(H_\rho))\psi)\longrightarrow0\label{eq6}
\end{gather}
as $t\rightarrow\infty$. On the other hand, 
\begin{eqnarray}
\lefteqn{(P_{\rm ac}(H_\rho)e^{itH_\rho}e^{-it\omega_\rho(D)}\phi,\psi)=(M_{\rho,\mathscr{D}}(t)\phi,M_{\rho,\mathscr{D}}(t)^*P_{\rm ac}(H_\rho)e^{it\omega_\rho(D)}e^{-itH_\rho}\psi)}\nonumber\\
\Eqn{=}(M_{\rho,\mathscr{D}}(t)\phi,W_{\rho,\mathscr{D}}^\pm(\omega_\rho(D),H_\rho)\psi)\nonumber\\
\Eqn{}+(M_{\rho,\mathscr{D}}(t)\phi,(M_{\rho,\mathscr{D}}(t)^*P_{\rm ac}(H_\rho)e^{it\omega_\rho(D)}e^{-itH_\rho}-W_{\rho,\mathscr{D}}^\pm(\omega_\rho(D),H_\rho))\psi).\qquad\label{eq7}
\end{eqnarray}
The first term of the right-hand side of \eqref{eq7} goes to zero by \eqref{wlim2} as $t\rightarrow\infty$. By the Schwarz inequality and the unitariness of $M_{\rho,\mathscr{D}}(t)$, the absolute value of the second term is
\begin{eqnarray}
\lefteqn{|(M_{\rho,\mathscr{D}}(t)\phi,(M_{\rho,\mathscr{D}}(t)^*P_{\rm ac}(H_\rho)e^{it\omega_\rho(D)}e^{-itH_\rho}-W_{\rho,\mathscr{D}}^\pm(\omega_\rho(D),H_\rho))\psi)|}\nonumber\\
\Eqn{\leqslant}\|\phi\|\|(M_{\rho,\mathscr{D}}(t)^*P_{\rm ac}(H_\rho)e^{it\omega_\rho(D)}e^{-itH_\rho}-W_{\rho,\mathscr{D}}^\pm(\omega_\rho(D),H_\rho))\psi\|,
\end{eqnarray}
where $\|\cdot\|$ denotes the usual $L^2$-norm, which also goes to zero as $t\rightarrow\infty$ by Assumption \ref{ass2}. We thus obtain, as $t\rightarrow\infty$,
\begin{equation}
(P_{\rm ac}(H_\rho)e^{itH_\rho}e^{-it\omega_\rho(D)}\phi,\psi)\longrightarrow0.\label{eq8}
\end{equation}
Combining \eqref{eq6} and \eqref{eq8} shows that \eqref{wlim1} holds. This completes the proof.
\end{proof}

We have to exclude the case where $\rho=1/2$ because \eqref{wlim2} does not always hold in this case. When $\rho=1/2$, $M_{1/2,\mathscr{D}}(t)$ has the following shape
\begin{equation}
M_{1/2,\mathscr{D}}(t)=\exp\left[-i\lambda T_{1/2,\gamma}(t)\right].
\end{equation}
This is a scalar and oscillates as $t\rightarrow\infty$. As stated in the first section, the operator $\omega_{1/2}(D)=\sqrt{-\Delta}$ describes a relativistic system and so there are large differences between the cases $\rho=1/2$ and $1/2<\rho\leqslant1$.\\

\noindent\textbf{Acknowledgments.} This work was partially supported by the Grant-in-Aid for Young Scientists (B) \#16K17633 from JSPS.


\end{document}